%
%
%
%
%
%
%
\documentclass[%
 aip,
 amsmath,amssymb,
reprint, onecolumn 
]{revtex4-1}

\usepackage{graphicx}
\usepackage{dcolumn}
\usepackage{bm}

\usepackage[utf8]{inputenc}
\usepackage[T1]{fontenc}
\usepackage{mathptmx}
\usepackage{etoolbox}

\usepackage{graphicx}
\usepackage{epstopdf, epsfig}
\usepackage{tikz}
\usepackage{amsmath}
\usepackage{amssymb}
\usepackage{amsthm}
\usepackage{amsfonts}
\usepackage{enumerate}

\newtheorem{lemma}{Lemma}

\newtheorem{theorem}{Theorem}

\theoremstyle{definition}
\newtheorem{definition}[theorem]{Definition}

\def\A{{\bm A}}
\def\B{{\bm B}}
\def\E{{\bm E}}
\def\n{{\bm n}}
\def\u{{\bm u}}
\def\v{{\bm v}}
\def\w{{\bm w}}
\def\x{{\bm x}}

\def\t{{\bm t}}
\def\bzero{{\bm 0}}
\def\bvarphi{{\bm \varphi}}
\def\btheta{{\bm \theta}}
\def\bphi{{\bm \phi}}
\def\bpsi{{\bm \psi}}
\def\d{{\rm d}}
\def\hel{{\mathcal{H}}}

\makeatletter
\def\@email#1#2{%
 \endgroup
 \patchcmd{\titleblock@produce}
  {\frontmatter@RRAPformat}
  {\frontmatter@RRAPformat{\produce@RRAP{*#1\href{mailto:#2}{#2}}}\frontmatter@RRAPformat}
  {}{}
}%
\makeatother
\begin{document}

\preprint{AIP/123-QED}

\title[Relative magnetic helicity under turbulent relaxation]{Relative magnetic helicity under turbulent relaxation}
\author{S. Lindberg}
\email{sauli.lindberg@helsinki.fi.}
\affiliation{ 
Department of Mathematics and Statistics, University of Helsinki, P.O. Box 68, 00014 Helsingin yliopisto, Finland
}%
\author{D. MacTaggart}%
 \email{david.mactaggart@glasgow.ac.uk.}
\affiliation{ 
School of Mathematics and Statistics, University of Glasgow, Glasgow, G12 8QQ, UK
}%


\begin{abstract}
Magnetic helicity is a quantity that underpins many theories of magnetic relaxation in electrically conducting fluids, both laminar and turbulent. Although much theoretical effort has been expended on magnetic fields that are everywhere tangent to their domain boundaries, many applications, both in astrophysics and laboratories, actually involve magnetic fields that are line-tied to the boundary, i.e. with a non-trivial normal component on the boundary. This modification of the boundary condition requires a modification of magnetic helicity, whose suitable replacement is called relative magnetic helicity. In this work, we investigate rigorously the behaviour of relative magnetic helicity under turbulent relaxation. In particular, we specify the normal component of the magnetic field on the boundary and consider the \emph{ideal magnetohydrodynamic limit} of resistivity tending to zero in order to model the turbulent evolution in the sense of Onsager's theory of turbulence. We show that relative magnetic helicity is conserved in this distinguished limit and that, for constant viscosity, the magnetic field relaxes in a weak, time-averaged sense to a magnetohydrostatic equilibrium.
\end{abstract}

\maketitle

\section{Introduction}
Magnetic relaxation describes the evolution of magnetic and velocity fields in electrically-conducting fluids, under magnetohydrodynamics (MHD), as they relax to some equilibrium end state (assuming such a state exists). One feature of theories of magnetic relaxation is to avoid the full details of the initial value problem (which is highly non-linear and could involve a series of complicated instabilities and/or turbulence) and predict the end state based on a subset of known properties of the system. The theory of magnetic relaxation, \emph{par excellence}, is \emph{Taylor relaxation} \cite{Tay74}. The key ingredient of this theory is \emph{magnetic helicity}, which is an invariant of ideal MHD. Taylor argued that while a turbulent relaxation would result in magnetic reconnection throughout the domain, the global magnetic helicity would nonetheless be approximately conserved. Using this conservation property in conjunction with Woltjer's theorem \cite{Wol58}, which states that a linear force-free field is the minimizer of magnetic energy for a given value of magnetic helicity, Taylor's theory allows for the prediction of the relaxed end state - a linear force-free field whose force-free parameter is determined by the magnetic helicity (and the magnetic energy). This result is remarkable given that nothing about the turbulent relaxation itself need be known explicitly. All that is assumed about the relaxation is that it is efficient enough to eliminate the kinetic energy of the system, thus allowing for a (quasi-)static equilibrium to be reached.

Extensions to Taylor relaxation have introduced the effects of other constraints upon the system \citep[e.g.][]{BD82,BDGCW83,Ham14}. However, the general spirit of these approaches is similar - the prediction of the final state is based on a variational calculation in which the magnetic energy is minimized subject to given constraints.

Although most theoretical work is based on magnetic fields which are tangent to their domain boundaries, many applications involve magnetic fields with non-zero normal components fixed on the domain boundary. This change in boundary conditions requires a change in the form of magnetic helicity that is needed. We will expand upon the details later, but, for now, it suffices to say that classical magnetic helicity needs to be replaced by \emph{relative magnetic helicity} (hereafter relative helicity) in order to have a gauge-invariant quantity that is conserved under ideal MHD \cite{BF84,FA85,MV23}. It has been shown that Woltjer's theorem can be adapted for line-tied magnetic fields \cite{AL91}, in which the minimizer of the magnetic energy for a fixed value of relative helicity is a linear force-free field.

Magnetic relaxation theories have also been developed for magnetic fields that are not everywhere tangent to the boundary, in a similar vein to the classical case \cite[e.g.][]{ADC02}. These studies are complemented by numerical experiments \cite[e.g.][]{YRG15,Pon16} that approximate the full initial value problem. What is missing among these approaches is a rigorous description of the behaviour of relative helicity and turbulent magnetic relaxation for magnetic fields that are not just tangent to the domain boundaries.

It is in this area that this paper makes a contribution. By focussing on \emph{line-tied} magnetic fields (fixed normal components on the boundary), we provide rigorous results for the behaviour of relative helicity under turbulent relaxation and the nature of relaxed states. We will begin by providing some motivation for our approach to MHD turbulence, which is based on Onsager's theory of ``ideal turbulence'' \cite{Ons49}. We then present the setup of our problem and describe the fundamental turbulent solutions that form the basis of our results. This section is followed by some further verification of our approach, by showing that energy and relative helicity behave as expected in a resistive regime. After this, we present our main results on MHD turbulent relaxtion in the ideal MHD limit. The paper concludes with a summary of our results and a short discussion.

\section{Motivation of  ``ideal turbulence''}
The approach of setting a magnetic relaxation theory in terms of a variational principle, as mentioned in the previous section, allows one to essentially ignore the details of the turbulent relaxation. However, as simulations \cite{YRG15,Pon16,Mof15} and some theoretical works \cite{Kom22,ConP23} indicate, a relaxation may not lead to a linear force-free field and so the evolution must be taken into consideration more directly. Our purpose here is to provide rigorous statements about relative helicity and turbulent relaxation, in order to complement the results of simulations. Therefore, we need a suitable description of MHD turbulence that captures the main properties of turbulence while being amenable to rigorous analysis. The approach we adopt is a basic extension of Onsager's theory of turbulence to MHD. Since this topic has been well-explained in other works \cite[e.g.][]{BV21,Ey24,DLS19}, we only highlight some key features here and direct the reader to other works for more information.

Onsager suggested that \emph{singular} or \emph{weak} solutions of the ideal fluid (Euler) equations would represent a good model of turbulence.  There has been much rigorous work in this area but, for now, let us stick to a qualitative and intuitive description of three key features that will provide some justification of modelling MHD turbulence with weak solutions in the ideal MHD limit.

\subsection{Anomalous dissipation}
An important property of turbulence is the so-called \textit{anomalous dissipation}. This describes the important feature of turbulence in which energy dissipation becomes essentially independent of the diffusive parameters in the system. Let us consider an incompressible electrically-conducting fluid with viscosity $\nu$ and resistivity $\eta$, in a domain $\Omega$. If $\B$ and $\u$ denote the magnetic and velocity fields respectively, then, under anomalous dissipation, we can expect that when $\nu$, $\eta\to 0$,
\begin{equation}
    \nu\int_0^T \int_\Omega |\nabla\u|^2\,\d^3x \, \d t + \eta\int_0^T \int_\Omega |\nabla\B|^2\,\d^3x \to \varepsilon > 0.
\end{equation}
From this statement, we are led naturally to the conclusion that $(\u,\B)$ is not smooth, satisfying
\begin{equation}
    \int_\Omega |\nabla\u|^2\,\d^3x\sim\frac{1}{\nu}\to\infty \quad \text{or} \quad  \int_\Omega |\nabla\B|^2\,\d^3x\sim\frac{1}{\eta}\to\infty.
\end{equation}
Anomalous dissipation is, therefore, a property of the ``roughness'' of $\u$ and $\B$ rather than the dissipative terms themselves. Onsager's original conjecture was that weak solutions of the Euler equations would dissipate energy if the velocities are in a H\"{o}lder space $C^\alpha$ with $\alpha\le1/3$. Systematic study via the method of convex integration was begun by De Lellis and Sz\'ekelyhidi\cite{DLS09,DLS13}, and after several seminal works the conjecture was proved by Isett \cite{Ise18} and by Buckmaster, De Lellis, Sz\'ekelyhidi and Vicol\cite{BDLSV19}. Onsager's conjecture also transfers to MHD \cite{Cal97,BV21}. 

\subsection{Fast magnetic reconnection}
A key feature of Taylor's theory is that there is \emph{efficient} or, as it is referred to in the MHD literature, \emph{fast} reconnection \cite{Priest14}. By ``fast'', we indicate that the process of reconnection does not depend on some slow diffusive timescale and, in the turbulent case, becomes essentially independent of $\eta$. For suitably smooth $\B$ and $\u$, no reconnection is possible in ideal MHD - the field lines are frozen into the flow and there are no changes in connectivity. However, by relaxing the smoothness of the vector fields, flux-freezing can break down \cite{EyAl06}. In this way, reconnection may occur even in the ideal MHD limit \cite{Ey11,Ey13N}. Similar to anomalous dissipation, this characteristic is no longer related to the dissipative terms in the MHD equations, but to the ``roughness'' of the fields.

\subsection{Magnetic helicity conservation}

Another important feature of MHD turbulence is the robust conservation of magnetic helicity (which, as we will describe later, is closely related to efficient reconnection in turbulent relaxation). Helicity conservation has been observed in many simulations (both laminar and turbulent), but it is also a feature of ideal turbulence. All theoretical studies for weak solutions, to date, have focussed on classical magnetic helicity, for which the magnetic field is everywhere tangent to the boundary or exists in a triply periodic box. It has been shown that, for weak solutions of the ideal MHD equations, magnetic helicity is already conserved at the $L^3$ integrability level \cite{KL07,Alu09}. However, magnetic helicity is not conserved, in general, only for finite total energy \cite{BeeBuckVic20,FLS24}. In other words, what is needed is that the magnetic and velocity fields satisfy
\begin{equation}\label{L3_bound}
    \int_0^T\int_\Omega(|\u|^3+|\B|^3)\,\d^3x\,\d t < \infty,
\end{equation}
rather than
\begin{equation}\label{L2_bound}
    \int_0^T\int_\Omega(|\u|^2+|\B|^2)\,\d^3x\,\d t < \infty,
\end{equation}
with the latter being the total energy accumulated over time $T$.

Weak solutions to the ideal MHD equations are highly non-unique \cite{BeeBuckVic20,FLS21}, and there is a danger of constructing solutions that are not physically viable. Faraco and Lindberg\cite{FL20} addressed this by considering solutions with more regularity - weak solutions of ideal MHD that arise in the ideal MHD limit $\eta\to0$ of \emph{Leray-Hopf solutions} (see Definition \ref{d:Leray-Hopf}). With this extra regularity, Faraco and Lindberg were able to prove that (classical) magnetic helicity is conserved in the ideal MHD limit, despite the fact that only the total energy (and not the $L^3$-norm) of the solutions can be controlled. We will adapt and extend their results to relative helicity and line-tied magnetic fields.

\section{Problem setup}
Now that we have provided some motivation for the use of Onsager's approach to turbulence, we now focus on the MHD turbulent relaxation of line-tied magnetic fields. Consider an incompressible fluid contained in a bounded, possibly multiply connected, domain $\Omega \subset \mathbb{R}^3$ with a smooth boundary. The evolution of the fluid is assumed to be governed by the viscous and resistive MHD equations
\begin{eqnarray}
    \frac{\partial\u}{\partial t} + (\u\cdot\nabla)\u &=& -\nabla p + (\nabla\times\B)\times\B + \nu\Delta\u, \label{mhd_1}\\
    \frac{\partial\B}{\partial t} &=& \nabla\times(\u\times\B) + \eta\Delta\B, \label{mhd_2}\\
    \nabla\cdot\u &=& \nabla\cdot\B = 0, \label{mhd_3}\\
    \u(\x,0) &=& \u_0, \quad \B(\x,0) = \B_0, \label{mhd_4}
\end{eqnarray}
for which we have set $\rho=\mu_0=1$ for simplicity. In addition to the variables already defined, $p$ is the fluid pressure and $\u_0$ and $\B_0$ are the initial velocity and magnetic fields.

On the boundary $\Gamma := \partial\Omega$, we  assume that the plasma velocity satisfies the no-slip condition $\u|_\Gamma = \bzero$ and that $\Gamma$ is perfectly conducting, $\E \times \n = (\B \times \u + \eta \nabla \times \B) \times \n = \bzero$. These assumptions imply the further conditions that $(\nabla \times \B) \times \n = \bzero$ and that $\B(\x,t) \cdot \n(\x) = g(\x)$ is stationary, so that (\ref{mhd_1})-(\ref{mhd_4}) are complemented with
\begin{equation} \label{mhd_5}
\u|_{\Gamma} = \bzero, \quad (\B-\B_0) \cdot \n|_\Gamma = 0, \quad (\nabla \times \B) \times \n|_\Gamma = \bzero.
\end{equation}
Equations (\ref{mhd_1})-(\ref{mhd_5}) form the basic model of MHD, with line-tied magnetic fields, that we will study; they have also been studied under the moniker of the \emph{Ohm-Navier-Stokes equations}\cite{YG83,YUI84}. For turbulent solutions, we will consider solutions of these equations in the ideal MHD limit $\eta\to0$. 

\subsection{Relative helicity}
As mentioned earlier, we require relative magnetic helicity for line-tied magnetic fields. Since we wish our results to be applicable in both astrophysical and laboratory environments, we need to consider relative helicity in multiply, as well as simply, connected domains. The relevant definition of relative helicity in multiply connected domains was recently developed by \cite{MV23} and has the form
\begin{equation}\label{hel}
    \hel = \int_\Omega(\A+\A')\cdot(\B-\B')\,\d^3x -\sum_{i=1}^g\left(\oint_{\gamma_i}(\A+\A')\cdot\t\,\d x\right)\left(\int_{\Sigma_i}(\B-\B')\cdot\n_{\Sigma_i}\,\d^2 x\right),
\end{equation}
where $\A$ and $\A'$ are vector potentials of $\B$ and $\B'$ respectively, such that $\B=\nabla\times\A$ and $\B'=\nabla\times\A'$. The quantities related to the geometry of the domain are: $g$, the genus of the domain (typically, the number of holes in the domain for laboratory applications); $\gamma_i$, the $i$th closed path around a hole, on $\Gamma$, with corresponding unit tangent vector $\t_i$; $\Sigma_i$, the $i$th cross-sectional cut in $\Omega$ with corresponding unit normal vector $\n_{\Sigma_i}$. More detailed descriptions of this geometrical setup can be found in \cite{FLMV21} and \cite{MV23}.

If the domain is simply connected ($g=0$) or the flux of $\B$ matches that of the reference field $\B'$ on the cross-sectional cuts, equation (\ref{hel}) reduces to the so-called Finn-Antonsen formula \cite{FA85}, which is used widely in the literature.

In this work, we fix the reference field $\B'$ to be a stationary potential field whose fluxes over the cross-sectional cuts match those of $\B$ (and, therefore, those of $\B_0$), i.e.
\begin{equation} \label{potential_field}
    \nabla \times \B' = \bzero, \quad \nabla \cdot \B' = 0, \quad (\B' - \B) \cdot \n|_\Gamma = 0, \quad \int_{\Sigma_i} (\B'-\B) \cdot \n_{\Sigma_i}\,\d^2 x = 0.
\end{equation}
We make this choice so that, under resistive MHD, $\hel(t)\to0$ as $t\to\infty$, since a potential field is the minimum-energy field for a given distribution of the normal component of the magnetic field on $\Gamma$. 

\subsection{Leray-Hopf solutions}
Our results are based on Leray-Hopf solutions, which are weak solutions of the viscous, resistive MHD equations that obey an energy inequality. Before introducing these solutions, modified from their standard form in order to accommodate line-tied magnetic fields, we present a brief overview of the necessary function spaces and their notation.

\subsubsection{Function spaces}
We use the standard notation of $L^p(\Omega)$ to represent the space of $p$-integrable (vector) functions.  We also make use of $H^1(\Omega)$  to represent the Sobolev space of functions in $L^2(\Omega)$ with one weak derivative also in $L^2(\Omega)$. If $\v \in H^1(\Omega)$ vanishes on the boundary $\Gamma$, we denote $\v \in H^1_0(\Omega)$; see e.g. \cite{Eva10}.

Many of the following results will depend on divergence-free fields that are tangent to $\Gamma$, even if one of the novel aspects of our work is to present results for fields that do not satisfy this condition. For fields tangent to the boundary, we will make particular use of the space
\[
L^2_\sigma(\Omega) = \{\v\in L^2(\Omega);\,\, \nabla\cdot\v=0,\,\,\v\cdot\n|_\Gamma=0\}.
\]
Whenever we use $\sigma$ for a subscript to a function space, this indicates that the elements of the space are divergence-free and tangent to the boundary. If 0 is included in the subscript, this means that each element of the space vanishes on the boundary (which is the case for $\u$). A superscript asterisk attached to a function space denotes its dual. For instance, $\|\w\|_{H^1_\sigma(\Omega)^*} = \max_{\|\v\|_{H^1_\sigma(\Omega)}=1} \langle \w,\v\rangle$.

When the domain $\Omega$ is multiply connected, $L^2_\sigma(\Omega)$ can be expressed, via a Hodge-type decomposition, as 
\[
L^2_\sigma(\Omega) = L^2_\Sigma(\Omega) \oplus L^2_H(\Omega),
\]
where the space of harmonic Neumann vector fields is
\[
L^2_H(\Omega) = \{\v\in L^2_\sigma(\Omega):\,\,\nabla\times\v = \bzero\}, 
\]
and the space of zero-flux vector fields is 
\[
L^2_\Sigma(\Omega) = \left\{\v\in L^2_\sigma(\Omega):\,\,\int_{\Sigma_i}\v\cdot\n_{\Sigma_i}\,\d^2x=0\,\,{\rm for}\,\,i=1,\dots, g \right\}.
\]
We will make use of this decomposition in defining Leray-Hopf solutions for line-tied magnetic fields.

For time-dependent vector fields, given $1\le p<\infty$ and a Banach space $X$, we denote by $L^p(0,T;X)$ and $L^\infty(0,T;X)$ the Bochner spaces of Bochner-integrable functions $\v:\,\,(0,T)\to X$ satisfying
\begin{eqnarray*}
    \int_0^T\|\v(\cdot,t)\|_X^p\,\d t < \infty, \quad \|\|\v(\cdot,t)\|_X\|_{L^\infty(0,T)} < \infty,
\end{eqnarray*}
respectively. If $\v \in L^\infty(0,T;L^2_\sigma)$ has the extra property that $\v(t_j) \rightharpoonup \v(t)$ in $L^2_\sigma$ whenever $t_j \to t \in [0,T)$, we denote $\v \in C_w([0,T);L^2_\sigma(\Omega))$.

\subsubsection{Main definitions}
We have now developed enough notation to be able to write down the definition of line-tied Leray-Hopf solutions for resistive MHD.

\begin{definition} \label{d:Leray-Hopf}
Let $\u_0 \in L^2_\sigma(\Omega)$ and $\B_0 \in L^2(\Omega)$ with $\nabla \cdot \B_0 = 0$. Suppose that $\u \in C_w([0,T);L^2_\sigma(\Omega)) \cap L^2(0,T;H^1_0(\Omega))$ and $\B-\B' \in C_w([0,T);L^2_\Sigma(\Omega)) \cap L^2(0,T; H^1(\Omega))$ satisfy $\partial_t \u \in L^1(0,T;(H^1_{0,\sigma}(\Omega))^*)$ and $\partial_t \B \in L^1(0,T;(H^1_\sigma(\Omega))^*)$, and that

\begin{eqnarray}
    \left\langle \frac{\partial\u}{\partial t}, \bvarphi \right\rangle + \int_{\Omega} (\u \cdot \nabla \u - (\nabla \times \B) \times \B) \cdot \bvarphi\,\d^3x + \nu \int_\Omega \nabla \u : \nabla \bvarphi\,\d^3x &=& 0, \label{lh_weak_1} \\
    \left\langle \frac{\partial\B}{\partial t}, \btheta \right\rangle + \int_{\Omega} \B \times \u \cdot \nabla \times \btheta\,\d^3x + \eta \int_\Omega \nabla \times \B \cdot \nabla \times \btheta\,\d^3x = 0 \label{lh_weak_2}
\end{eqnarray}
hold at a.e. $t \in [0,T)$ and every $\boldsymbol{\varphi} \in H^1_{0,\sigma}(\Omega)$ and $\boldsymbol{\theta} \in H^1_\sigma(\Omega)$. Suppose, furthermore, that $\u(\cdot,0) = \u_0$ and $\B(\cdot,0) = \B_0$ and that $\u$ and $\B$ satisfy the \emph{energy inequality}
\begin{eqnarray}
   & &\displaystyle \frac{1}{2} \int_{\Omega} (|\u(\x,t)|^2 + |\B(\x,t)|^2) \, \d^3x \nonumber\\[4pt]
   &+&\displaystyle \int_s^t \int_{\Omega} (\nu |\nabla  \u(\x,\tau)|^2 + \eta |\nabla \times  \B(\x,\tau)|^2) \, \d^3x \, \d\tau \nonumber\\[4pt]
   &\le& \displaystyle \frac{1}{2} \int_{\Omega} (|\u(\x,s)|^2 + |\B(\x,s)|^2) \, \d^3x, \label{lh_energy}
\end{eqnarray}
for all $t > s$ and almost all $s \geq 0$, including $s = 0$. Then $(\u,\B)$ is called a Leray-Hopf solution of equations (\ref{mhd_1})-(\ref{mhd_5}).
\end{definition}

This is the form of weak solution of the MHD equations that will form the basis of our later results. We also define ideal MHD limits of Leray-Hopf solutions.

\begin{definition}
Suppose $\u_0$ and $\B_0$ are as above, $\nu > 0$ is fixed and $\eta_j \to 0$. For each $j$ denote the corresponding Leray-Hopf solution by $(\u_j,\B_j)$. If $\|\u_j-\u\|_{L^2(0,T;L^2)} \to 0$ and $\|\B_j-\B\|_{L^2(0,T;L^2)} \to 0$ for each $T > 0$, we say that $\u \in C_w([0,\infty);L^2_\sigma)$ and $\B \in C_w([0,\infty);L^2)$ \emph{arise at the ideal MHD limit} $\eta \to 0$.
\end{definition}


\subsection{Existence of Leray-Hopf solutions}
We now prove the existence of Leray-Hopf solutions, as we have just defined them. In the following, we closely follow the approach described in \cite{FL20}, whilst making some adaptations for accommodating line-tied magnetic fields.

\begin{theorem} \label{t:Leray-Hopf}
    For every $\u_0 \in L^2_\sigma(\Omega)$ and $\B_0 \in L^2(\Omega)$ with $\nabla \cdot \B_0 = 0$ there exists a Leray-Hopf solution $(\u,\B)$ of \eqref{mhd_1}--\eqref{mhd_5}. Furthermore,
   \begin{equation}\label{hel_decay}
        \hel(t) = \hel(0) -2\eta\int_0^t\int_\Omega\B\cdot\nabla\times\B\,\d^3x\,\d\tau \quad \text{for all } t \in [0,\infty).
        \end{equation}
\end{theorem}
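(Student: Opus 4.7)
My plan is to adapt the Galerkin construction of Faraco--Lindberg \cite{FL20} to the line-tied setting. Decompose $\B=\B'+\tilde\B$, where $\B'$ is the stationary potential field satisfying \eqref{potential_field}; then $\tilde\B\cdot\n|_\Gamma=\bzero$ and $\int_{\Sigma_i}\tilde\B\cdot\n_{\Sigma_i}\,\d^2x=0$ hold initially. A crucial preliminary observation is that the perfect-conductor condition $\E\times\n|_\Gamma=\bzero$, together with Faraday's law and Stokes' theorem on a cut $\Sigma_i$ whose boundary lies on $\Gamma$, gives
\[
\frac{\d}{\d t}\int_{\Sigma_i}\B\cdot\n_{\Sigma_i}\,\d^2x \;=\; -\oint_{\partial\Sigma_i}\E\cdot\d\boldsymbol\ell \;=\; 0,
\]
so these fluxes are preserved for all $t\ge 0$; hence $\tilde\B(t)\in L^2_\Sigma(\Omega)$ throughout the evolution, and the summation term in \eqref{hel} vanishes identically.

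Next I would run a Galerkin scheme using bases $\{\bvarphi_k\}\subset H^1_{0,\sigma}(\Omega)$ and $\{\bpsi_k\}\subset H^1_\sigma(\Omega)\cap L^2_\Sigma(\Omega)$, producing approximants $\u^n$ and $\B^n=\B'+\tilde\B^n$. Local solvability of the resulting ODE is standard. Testing by $\u^n$ and $\tilde\B^n$ respectively, using $\int_\Omega(\u^n\cdot\nabla\u^n)\cdot\u^n\,\d^3x=0$ together with the algebraic identity $((\nabla\times\B^n)\times\B^n)\cdot\u^n=(\B^n\times\u^n)\cdot(\nabla\times\B^n)$ to cancel the Lorentz--induction coupling, and exploiting the boundary conditions $\u^n|_\Gamma=\bzero$ and $(\nabla\times\B^n)\times\n|_\Gamma=\bzero$ (with the $\B'$-generated source terms handled by Young's inequality), yields uniform bounds $\u^n\in L^\infty(0,T;L^2_\sigma)\cap L^2(0,T;H^1_{0,\sigma})$ and $\tilde\B^n\in L^\infty(0,T;L^2_\Sigma)\cap L^2(0,T;H^1_\sigma)$, together with $\partial_t\u^n$ and $\partial_t\tilde\B^n$ in appropriate negative-order duals. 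Aubin--Lions compactness then gives subsequential strong convergence $\u^n\to\u$ and $\B^n\to\B$ in $L^2(0,T;L^2(\Omega))$, enough to pass to the limit in the quadratic nonlinearities and recover \eqref{lh_weak_1}--\eqref{lh_weak_2}; the energy inequality \eqref{lh_energy} follows by weak lower semicontinuity, and weak continuity in time plus attainment of the initial data comes from the $\partial_t$ bounds.

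The helicity identity \eqref{hel_decay} is, in my view, the most delicate part. At the Galerkin level all fields are smooth, so I would pick a vector potential $\A^n$ via a Biot--Savart-type operator in any fixed spatial gauge (the gauge is immaterial by gauge invariance of \eqref{hel}), giving $\partial_t\A^n=-\E^n-\nabla\phi^n$ with $\E^n=-\u^n\times\B^n+\eta\nabla\times\B^n$ and some scalar $\phi^n$. Since the preserved fluxes kill the summation in \eqref{hel}, $\hel_n$ reduces to $\int_\Omega(\A^n+\A')\cdot(\B^n-\B')\,\d^3x$. Differentiating in time and integrating by parts, the divergence-free conditions combined with $(\B^n-\B')\cdot\n|_\Gamma=\bzero$ annihilate the $\nabla\phi^n$ contribution, while $\E^n\times\n|_\Gamma=\bzero$ kills the $\int_\Gamma(\A^n+\A')\times\E^n\cdot\n\,\d^2x$ surface integral, leaving
\[
\frac{\d\hel_n}{\d t} \;=\; -2\int_\Omega\E^n\cdot\B^n\,\d^3x \;=\; -2\eta\int_\Omega\B^n\cdot\nabla\times\B^n\,\d^3x,
\]
where $(\u^n\times\B^n)\cdot\B^n=0$ was used in the last step. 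The main obstacle is then passage to the limit in this identity: the right-hand side converges because the strong $L^2(0,T;L^2)$ convergence of $\B^n$ pairs with the weak $L^2(0,T;L^2)$ convergence of $\nabla\times\B^n$ (equivalently $\nabla\times\tilde\B^n$), while $\hel_n(t)\to\hel(t)$ follows from continuity of the Biot--Savart map producing $\A^n$ from $\B^n$ together with the strong $L^2$ convergence of $\B^n$.
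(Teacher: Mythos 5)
The existence half of your proposal is essentially the paper's argument: the same decomposition $\B=\B'+\B_f$ with $\B_f$ tangent to $\Gamma$ and flux-free on the cuts, the same Galerkin scheme built on spectral bases of $H^1_{0,\sigma}(\Omega)$ and $H^1_\sigma(\Omega)\cap L^2_\Sigma(\Omega)$ (the paper's Lemmas \ref{Stokes lemma}--\ref{Magnetostatic lemma}), the same energy estimate, compactness and limit passage. Your preliminary Faraday--Stokes argument for conservation of the cut fluxes is only heuristic at this regularity; in the weak setting the flux matching is simply imposed through the solution space $C_w([0,T);L^2_\Sigma(\Omega))$, which is exactly what your choice of basis does anyway, so nothing is lost there.

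The genuine gap is in your derivation of \eqref{hel_decay}. A Galerkin approximant satisfies only the \emph{projected} induction equation: $(\partial_t\B^n,\w_j)_{L^2}$ equals the corresponding right-hand side only for $j\le n$, so $\partial_t\B^n\neq\nabla\times(\u^n\times\B^n)+\eta\Delta\B^n$ as distributions. Consequently your relation $\partial_t\A^n=-\E^n-\nabla\phi^n$ with $\E^n=-\u^n\times\B^n+\eta\nabla\times\B^n$ is false, and the exact identity $\d\hel_n/\d t=-2\eta\int_\Omega\B^n\cdot\nabla\times\B^n\,\d^3x$ does not hold: it picks up a projection-error term in which the part of $\nabla\times(\u^n\times\B^n)+\eta\Delta\B^n$ orthogonal to ${\rm span}\{\w_1,\dots,\w_n\}$ is paired against $\A^n+\A'$. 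Controlling this term is not routine, because that residual is bounded only in $(H^1_\sigma(\Omega))^*$ (with $L^{4/3}$ integrability in time), while $\A^n$ varies with $n$ and is merely bounded in $L^\infty(0,T;H^1)$, so $({\rm Id}-Q_n)(\A^n+\A')$ need not tend to zero strongly in $H^1$. The paper sidesteps this entirely: once the cut fluxes of $\B$ and $\B'$ match, $\hel$ reduces to the Finn--Antonsen form, and \eqref{hel_decay} is then proved directly for the Leray--Hopf solution as in \cite{FL20}, using that $\B-\B'\in L^\infty(0,T;L^2_\Sigma)\cap L^2(0,T;H^1)$ and $\partial_t\B\in L^{4/3}(0,T;(H^1_\sigma)^*)$ make a suitably gauged vector potential an admissible test field in \eqref{lh_weak_2} (after a standard mollification-in-time argument), giving the identity for a.e.\ $t$ and then for all $t$ by weak continuity in time. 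If you redirect your helicity computation from the Galerkin level to the limiting Leray--Hopf solution in this way, the rest of your limit-passage discussion (strong--weak pairing, continuity of the Biot--Savart operator) becomes unnecessary and the proof closes.
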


To set the stage for the proof, following \cite{YG83,YUI84}, we rewrite the MHD equations (\ref{mhd_1})-(\ref{mhd_5}) in terms of the magnetic field $\B_f = \B-\B'$, related to the free magnetic energy, and seek solutions $(\B_f,\u)$ of
\begin{eqnarray}
    &&\frac{\partial\u}{\partial t} + (\u\cdot\nabla)\u = -\nabla p + (\nabla\times\B_f)\times(\B_f+\B') + \nu\Delta\u, \label{mhd_free_1}\\
    &&\frac{\partial\B_f}{\partial t} = \nabla\times[\u\times(\B_f+\B')] + \eta\Delta\B_f, \label{mhd_free_2}\\
    &&\nabla\cdot\u = \nabla\cdot\B_f = 0, \label{mhd_free_3}\\
    &&\u(\x,0) = \u_0, \quad \B_f(\x,0) = \B_0-\B', \label{mhd_free_4} \\
    &&\u|_{\Gamma} = \bzero, \quad \B_f \cdot \n|_\Gamma = 0, \quad (\nabla \times \B_f) \times \n|_\Gamma = \bzero. \label{mhd_free_5}
\end{eqnarray}
The benefit of rewriting the MHD equations in terms of $\B_f$ is that this field is tangential to the boundary and we can make use of the proof for when $\B\cdot\n|_\Gamma=0$ \cite{FL20}. We will solve for $\u \in L^\infty(0,\infty;L^2_\sigma(\Omega)) \cap L^2(0,\infty;H^1_0(\Omega))$ and $\B_f \in L^\infty(0,\infty;L^2_\Sigma(\Omega)) \cap L^2(0,\infty;H^1(\Omega))$.

In order to produce a Galerkin approximation of the MHD equations (\ref{mhd_free_1})-(\ref{mhd_free_5}), we first appeal to the following two results:

\begin{lemma} \label{Stokes lemma}
$L^2_\sigma(\Omega)$ has an orthonormal basis $\{\v_j\}_{j \in \mathbb{N}}$ with the following properties: for every $j \in \mathbb{N}$ there exists $\lambda_j > 0$ such that $\v_j \in H^1_{0,\sigma}(\Omega)$ satisfies
\begin{equation}
(\v_j, \bphi)_{H^1_{0,\sigma}} = \lambda_j (\v_j, \bphi)_{L^2}
\end{equation}
for all $\bphi \in H^1_{0,\sigma}(\Omega)$. Furthermore, $\{\v_j/\sqrt{\lambda_j}\}_{j \in \mathbb{N}}$ is an orthonormal basis of $H^1_{0,\sigma}(\Omega)$.
\end{lemma}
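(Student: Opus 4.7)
The plan is to recognize the statement as the spectral theorem applied to a Stokes-type resolvent operator, following a completely standard pattern for second-order elliptic eigenvalue problems in Hilbert space.

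First, I would define a solution operator $T\colon L^2_\sigma(\Omega)\to L^2_\sigma(\Omega)$ via the Riesz representation theorem. Equip $H^1_{0,\sigma}(\Omega)$ with the Dirichlet inner product $(\u,\v)_{H^1_{0,\sigma}}=\int_\Omega \nabla\u:\nabla\v\,\d^3x$, which is equivalent to the full $H^1$-norm by Poincaré's inequality on $H^1_0$. For each $\bm f\in L^2_\sigma(\Omega)$ the linear functional $\bphi\mapsto(\bm f,\bphi)_{L^2}$ is bounded on $H^1_{0,\sigma}(\Omega)$, so the Riesz representation theorem yields a unique $T\bm f\in H^1_{0,\sigma}(\Omega)$ satisfying $(T\bm f,\bphi)_{H^1_{0,\sigma}}=(\bm f,\bphi)_{L^2}$ for every $\bphi\in H^1_{0,\sigma}(\Omega)$. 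Composing with the inclusion $H^1_{0,\sigma}(\Omega)\hookrightarrow L^2_\sigma(\Omega)$ gives a bounded operator $T$ on $L^2_\sigma(\Omega)$; note that $H^1_{0,\sigma}(\Omega)$ is a closed subspace of $H^1_0(\Omega)$ since the divergence is continuous from $H^1$ to $L^2$, so the Riesz step is legitimate.

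Second, I would verify that $T$ is self-adjoint, positive, and compact. Testing the defining equation for $T\bm f$ with $\bphi=T\bm g$ and, symmetrically, the equation for $T\bm g$ with $\bphi=T\bm f$ yields $(\bm f,T\bm g)_{L^2}=(T\bm f,T\bm g)_{H^1_{0,\sigma}}=(\bm g,T\bm f)_{L^2}$, giving self-adjointness. Taking $\bphi=T\bm f$ gives $(T\bm f,\bm f)_{L^2}=\|T\bm f\|_{H^1_{0,\sigma}}^2\geq 0$; if this vanishes then $T\bm f=0$, hence $(\bm f,\bphi)_{L^2}=0$ for every $\bphi\in H^1_{0,\sigma}(\Omega)$, and density of $H^1_{0,\sigma}(\Omega)$ in $L^2_\sigma(\Omega)$ forces $\bm f=0$, so $T$ is positive and injective. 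Compactness follows because $T$ maps $L^2_\sigma$-bounded sets into $H^1_{0,\sigma}$-bounded sets, and the Rellich--Kondrachov theorem makes $H^1_0(\Omega)\hookrightarrow L^2(\Omega)$ compact on the smooth bounded domain $\Omega$.

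Third, I would apply the spectral theorem for compact self-adjoint positive injective operators to obtain an orthonormal basis $\{\v_j\}_{j\in\mathbb{N}}$ of $L^2_\sigma(\Omega)$ with eigenvalues $\mu_j>0$ and $\mu_j\to 0$. Setting $\lambda_j=1/\mu_j$ and inserting $T\v_j=\mu_j\v_j$ into the defining relation for $T$ gives exactly the claimed identity $(\v_j,\bphi)_{H^1_{0,\sigma}}=\lambda_j(\v_j,\bphi)_{L^2}$ for every $\bphi\in H^1_{0,\sigma}(\Omega)$. Orthonormality of $\{\v_j/\sqrt{\lambda_j}\}$ in $H^1_{0,\sigma}(\Omega)$ is immediate from taking $\bphi=\v_k$ in this identity together with the $L^2$-orthonormality of $\{\v_j\}$. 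For completeness, suppose $\w\in H^1_{0,\sigma}(\Omega)$ is $H^1_{0,\sigma}$-orthogonal to every $\v_j$; then the eigenvalue identity with $\bphi=\w$ gives $\lambda_j(\v_j,\w)_{L^2}=0$ for all $j$, and since $\{\v_j\}$ is an $L^2_\sigma$-basis we conclude $\w=0$.

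No step is truly hard. The single non-elementary ingredient is the Rellich--Kondrachov compact embedding $H^1_0(\Omega)\hookrightarrow L^2(\Omega)$, which is available here because $\Omega$ is bounded with smooth boundary; the mild subtlety is that one must check $H^1_{0,\sigma}(\Omega)$ is closed in $H^1_0(\Omega)$ so that the inherited inner product turns it into a Hilbert space, but this is immediate from continuity of the divergence. Everything else is a textbook application of Riesz representation and the spectral theorem.
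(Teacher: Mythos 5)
Your proposal is correct. The paper itself does not prove this lemma at all --- it simply points to Faraco--Lindberg \cite{FL20} and the references therein --- and the argument given there is exactly the one you reconstruct: the Riesz (Lax--Milgram) solution operator $T$ of the Stokes-type problem is bounded from $L^2_\sigma(\Omega)$ into $H^1_{0,\sigma}(\Omega)$, hence compact, self-adjoint and positive on $L^2_\sigma(\Omega)$ by Rellich--Kondrachov, and the spectral theorem produces the eigenbasis, with $\lambda_j = 1/\mu_j$ and the $H^1_{0,\sigma}$-orthonormality and completeness following from the eigenvalue identity as you state. Two minor points you should make explicit rather than assert: (i) the density of $H^1_{0,\sigma}(\Omega)$ in $L^2_\sigma(\Omega)$, used for injectivity of $T$, is itself a classical but nontrivial fact (it rests on the characterization of $L^2_\sigma(\Omega)$ as the $L^2$-closure of smooth, compactly supported, divergence-free fields on a bounded Lipschitz domain) and deserves a citation; (ii) the paper does not specify the inner product on $H^1_{0,\sigma}(\Omega)$, and your argument works verbatim with either the Dirichlet form $\int_\Omega \nabla\u:\nabla\bphi\,\d^3x$ or the full $H^1$ inner product --- only the eigenvalues and eigenfunctions change --- so your choice is harmless but worth flagging.
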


\begin{lemma}\label{Magnetostatic lemma}
$L^2_\Sigma(\Omega)$ has an orthonormal basis $\{\w_j\}_{j \in \mathbb{N}}$ with the following properties: for every $j \in \mathbb{N}$ there exists $\mu_j > 0$ such that $\w_j \in H^1_\sigma(\Omega)$ satisfies
\begin{equation} \label{Magnetostatic equation}
(\w_j, \bpsi)_{H^1_\sigma} = \mu_j (\w_j, \bpsi)_{L^2}
\end{equation}
for all $\bpsi \in H^1_\sigma(\Omega)$. Furthermore, $\{\w_j/\sqrt{\mu_j}\}_{j \in \mathbb{N}}$ is an orthonormal system in $H^1_\sigma(\Omega)$.
\end{lemma}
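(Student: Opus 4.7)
The plan is to mimic the Stokes-operator construction used for Lemma \ref{Stokes lemma}: realise $\{\w_j\}$ as eigenfunctions of a compact, self-adjoint, positive operator on the Hilbert space $L^2_\Sigma(\Omega)$, obtained by inverting a coercive bilinear form.

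First I would set $V := H^1_\sigma(\Omega) \cap L^2_\Sigma(\Omega)$, which is a closed subspace of $H^1_\sigma(\Omega)$; closedness follows because each cross-sectional flux functional $\w \mapsto \int_{\Sigma_i} \w \cdot \n_{\Sigma_i}\,\d^2 x$ is continuous on $H^1(\Omega)$ by the trace theorem. Applying Lax--Milgram to the symmetric, continuous, coercive form $(\cdot,\cdot)_{H^1_\sigma}$ on $V$, for each $\mathbf{f} \in L^2_\Sigma(\Omega)$ one obtains a unique $T\mathbf{f} \in V$ with $(T\mathbf{f},\bpsi)_{H^1_\sigma} = (\mathbf{f},\bpsi)_{L^2}$ for all $\bpsi \in V$. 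A short computation (setting $\bpsi = T\mathbf{f}$, $T\mathbf{g}$) shows that $T : L^2_\Sigma \to L^2_\Sigma$ is bounded, symmetric, and positive definite, while compactness follows from $\|T\mathbf{f}\|_{H^1_\sigma} \le \|\mathbf{f}\|_{L^2}$ combined with the Rellich embedding $H^1(\Omega) \hookrightarrow L^2(\Omega)$. The spectral theorem for compact self-adjoint positive operators then produces eigenvalues $\lambda_j \searrow 0^+$ and an $L^2_\Sigma$-orthonormal eigenbasis $\{\w_j\} \subset V$. Setting $\mu_j := 1/\lambda_j$ yields the eigenvalue identity \eqref{Magnetostatic equation} for all $\bpsi \in V$, and testing that identity against $\bpsi = \w_k$ shows simultaneously that $\{\w_j/\sqrt{\mu_j}\}$ is $H^1_\sigma$-orthonormal.

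The main obstacle will be promoting the class of test functions from $V$ to all of $H^1_\sigma(\Omega)$. I would exploit the Hodge-type decomposition $H^1_\sigma(\Omega) = V \oplus L^2_H(\Omega)$, available because $L^2_H(\Omega)$ is finite-dimensional and consists of smooth harmonic Neumann fields. For $\bpsi \in L^2_H(\Omega)$ the right-hand side of \eqref{Magnetostatic equation} vanishes by $L^2$-orthogonality of the Hodge components, so it only remains to verify $(\w_j,\bpsi)_{H^1_\sigma} = 0$. This becomes immediate once one works with the equivalent (via Gaffney's inequality on smooth domains) inner product $(\w,\bpsi)_{H^1_\sigma} = (\w,\bpsi)_{L^2} + (\nabla \times \w,\nabla \times \bpsi)_{L^2}$: since $\nabla \times \bpsi = \bzero$ on $L^2_H(\Omega)$, the curl term drops out and the $L^2$ term vanishes by Hodge orthogonality. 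Apart from this bookkeeping, everything else is a direct invocation of Lax--Milgram, Rellich, and the spectral theorem.
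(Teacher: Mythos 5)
Your construction is the standard spectral argument that lies behind this lemma; the paper itself gives no proof but defers to \cite{FL20}, where essentially the same route (invert the coercive form, obtain a compact self-adjoint operator on $L^2_\Sigma(\Omega)$, diagonalise) is taken. Two points in your write-up need to be made explicit. First, the phrase ``positive definite'' hides the only genuinely nontrivial ingredient: for the eigenfunctions of $T$ to form a \emph{complete} orthonormal basis of $L^2_\Sigma(\Omega)$ you need $\ker T=\{0\}$, i.e.\ that $V=H^1_\sigma(\Omega)\cap L^2_\Sigma(\Omega)$ is dense in $L^2_\Sigma(\Omega)$; the ``short computation'' you invoke only gives $(T\mathbf{f},\mathbf{f})_{L^2}\ge 0$. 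Density does hold, and can be obtained with the same observation you use later: approximate $\v\in L^2_\Sigma(\Omega)$ in $L^2$ by fields in $H^1_\sigma(\Omega)$ (compactly supported smooth solenoidal fields suffice) and subtract their $L^2$-orthogonal projection onto the finite-dimensional space $L^2_H(\Omega)$ of smooth harmonic Neumann fields, which keeps the approximants in $V$ and does not disturb the $L^2$ convergence. This should be stated, since completeness in $L^2_\Sigma(\Omega)$ is precisely what the Galerkin scheme needs.

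Second, your extension of \eqref{Magnetostatic equation} from $\bpsi\in V$ to all $\bpsi\in H^1_\sigma(\Omega)$ works only because you take $(\cdot,\cdot)_{H^1_\sigma}$ to be the curl-based inner product $(\w,\bpsi)_{L^2}+(\nabla\times\w,\nabla\times\bpsi)_{L^2}$, equivalent to the full $H^1$ inner product by Gaffney's inequality on a smooth domain. This is not mere bookkeeping: with the full $H^1$ inner product the identity tested against harmonic Neumann fields would not follow from your construction, so the choice of form is part of the content of the lemma (and it is the natural one here, since only $\nabla\times\B_n$ enters the dissipation in the Galerkin energy identity). With the density point supplied and the inner product fixed as you describe, your proof is correct.
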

For the proofs of these two lemmas see \cite{FL20} and the references contained therein.

\begin{proof}[Proof of Theorem \ref{t:Leray-Hopf}]
Projecting onto a subspace of $n$ eigenvectors, with the operators
\[
P_n \u = \sum_{j=1}^n \left( \u, \frac{\v_j}{\sqrt{\lambda_j}} \right)_{H^1_{0,\sigma}} \frac{\v_j}{\sqrt{\lambda_j}}, \quad Q_n \B = \sum_{j=1}^n \left( \B, \frac{\w_j}{\sqrt{\mu_j}} \right)_{H^1_\sigma} \frac{\w_j}{\sqrt{\mu_j}},
\]
the $n$th order Galerkin approximation is 
\begin{eqnarray}
&& \frac{\d}{\d t} (\u_n,\v_j)_{L^2} + \nu (\nabla \u_n,\nabla \v_j)_{L^2} + \langle (\u_n \cdot \nabla) \u_n - (\nabla \times \B_n) \times \B_n, \v_j\rangle_{(H^1_{0,\sigma})^*-H^1_{0,\sigma}} \nonumber\\[4pt]
&&= \langle (\nabla \times \B_n) \times \B',\v_j \rangle_{(H^1_{0,\sigma})^*-H^1_{0,\sigma}}, \\[4pt]
&& \frac{\d}{\d t} (\B_n,\w_j)_{L^2} + \eta (\nabla \times \B_n,\nabla \times \w_j)_{L^2} + \langle \nabla \times (\B_n \times \u_n),\w_j \rangle_{(H^1_\sigma)^*-H^1_\sigma} \nonumber\\[4pt]
&&= - \langle \nabla \times (\B' \times \u_n), \w_j \rangle_{(H^1_\sigma)^*-H^1_\sigma}, \\ [4pt]
&& \u_n(\cdot,0) = P_n \u_0, \quad \B_n(\cdot,0) = Q_n(\B_0-\B'),
\end{eqnarray}
where
\[
\u_n(\x,t) = \sum_{j=1}^n c_{nj}(t) \v_j(\x), \quad \B_n(\x,t) = \sum_{j=1}^n d_{nj}(t) \w_j(\x),
\]
with $c_{nj}$, $d_{nj}\in C^1([0,T))$. 

By denoting
\begin{eqnarray*}
&& \alpha_{jkl} := \int_\Omega (\v_k \cdot \nabla) \v_l \cdot \v_j\,\d^3x, \quad \beta_{jkl} := \int_\Omega (\w_k \cdot \nabla) \w_l \cdot \v_j\,\d^3x, \\
&& \gamma_{jkl} := \int_\Omega (\w_k \cdot \nabla) \v_l \cdot \w_j\,\d^3x, \quad \delta_{jkl} := \int_\Omega (\v_k \cdot \nabla) \w_l \cdot \w_j\,\d^3x, \\
&& \epsilon_{j\ell} := \int_\Omega (\nabla \times \w_\ell) \times \B' \cdot \v_j\,\d^3x,
\end{eqnarray*}
the Galerkin equations obtain the form
\begin{eqnarray}
& \dot{c}_{nj}(t) - \nu \lambda_j c_{nj}(t) + \displaystyle\sum_{k,l=1}^n c_{nk}(t) c_{nl}(t) \alpha_{jkl} - \sum_{k,l=1}^n d_{nk}(t) d_{nl}(t) \beta_{jkl} \nonumber\\
&= \displaystyle\sum_{l=1}^n d_{n\ell}(t) \epsilon_{jl}, \\
& \dot{d}_{nj}(t) - \eta \mu_j d_{nj}(t) + \displaystyle\sum_{k,l=1}^n c_{nk}(t) d_{nl}(t) \gamma_{jkl} - \sum_{k,l=1}^n d_{nk}(t) c_{nl}(t) \delta_{jkl} \nonumber\\
&= -\displaystyle\sum_{l=1}^n c_{n\ell}(t) \epsilon_{jl}, \\
& c_{nj}(0) = (\u_0,\v_j)_{L^2}, \quad d_{nj}(0) = (\B_0-\B', \w_j)_{L^2}. \nonumber
\end{eqnarray}
By the standard theory of ODEs, a unique smooth solution exists for all times, and the energy equality obtains the form
\begin{eqnarray}
&& \sum_{j=1}^n [c_{nj}(t)^2 + d_{nj}(t)^2] + 2 \sum_{j=1}^n \int_s^t [\nu \lambda_j c_{nj}(\tau)^2 + \eta \mu_j d_{nj}(\tau)^2]\, \d \tau \nonumber\\
&&= \sum_{j=1}^n [c_{nj}(s)^2 + d_{nj}(s)^2]
\end{eqnarray}
or, written another way,
\begin{eqnarray}\label{lh_ineq_2}
&&\frac{1}{2} \int_\Omega (|\u_n(t)|^2+|\B_n(t)|^2)\,\d^3x + \int_s^t \int_\Omega (\nu |\nabla \u|^2 + \eta |\nabla \times \B_n|^2)\,\d^3x\,\d\tau \nonumber \\[4pt] 
&&= \frac{1}{2} \int_\Omega (|\u_n(s)|^2+|\B_n(s)|^2)\,\d^3x.  
\end{eqnarray}

By standard arguments, we pass to the weak limit in $L^2(0,T;H^1(\Omega))$ at each $T > 0$ to obtain a Leray-Hopf solution $\u \in L^\infty(0,\infty;L^2_\sigma) \cap L^2(0,\infty;H^1_0)$ and $\B_f \in L^\infty(0,\infty;L^2_\Sigma) \cap L^2(0,\infty;H^1)$ with
\begin{eqnarray}
&&\frac{1}{2} \int_\Omega (|\u(t)|^2+|\B_f(t)|^2)\,\d^3x + \int_s^t \int_\Omega (\nu |\nabla \u|^2 + \eta |\nabla \times \B_f|^2)\,\d^3x\,\d\tau \nonumber\\[4pt]
&&\leq \frac{1}{2} \int_\Omega (|\u(s)|^2 + |\B_f(s)|^2)\,\d^3x, \label{lh_energy_ineq1}
\end{eqnarray}
which can be rewritten as the strong energy inequality
\begin{eqnarray}
&& \frac{1}{2} \int_{\Omega} (|\u(\x,t)|^2 + |\B(\x,t)|^2) \, \d^3x \nonumber\\
&&+ \int_s^t \int_{\Omega} (\nu |\nabla \u(\x,\tau)|^2 + \eta |\nabla \times \B(\x,\tau)|^2) \, \d^3x \, d\tau \nonumber\\
&&\le \frac{1}{2} \int_{\Omega} (|\u(\x,s)|^2 + |\B(\x,s)|^2) \, d^3x,
\end{eqnarray}
for all $t > s$ and a.e. $s \geq 0$, including $s = 0$. As in \cite{FL20}, for a subsequence, $\partial_t \u_n \rightharpoonup \partial_t \u$ in $L^{4/3}(0,T;(H_{0,\sigma}^1)^*)$ and $\partial_t \B_n \rightharpoonup \partial_t \B$ in $L^{4/3}(0,T;(H_\sigma^1)^*)$ for every $T > 0$.

It remains to prove formula \eqref{hel_decay}. Since we are considering $\B-\B' \in L^\infty(0,T;L^2_\Sigma) \cap L^2(0,T;H^1)$, equation (\ref{hel}) reduces to the Finn-Antonsen formula
\begin{equation}
    \hel = \int_\Omega(\A+\A')\cdot(\B-\B')\,\d^3x,
\end{equation}
since the fluxes of $\B$ and $\B'$ match on the cross-sectional cuts $\Sigma_i$ ($i=1,\dots,g$) in $\Omega$. We can then follow \cite{FL20} to get \eqref{hel_decay}.
\end{proof}

\section{Ideal turbulent MHD}
We now move to ideal turbulence, as described earlier, by taking the ideal MHD limit $\eta\to0$. This distinguished limit has been studied previously in the context of \emph{laminar} magnetic relaxation. Moffatt \cite{Mof85} used the system (\ref{mhd_1})-(\ref{mhd_4}) with $\eta = 0$, $\u|_\Gamma = \bzero$ and $\B \cdot \n|_\Gamma = 0$ to investigate the end states of magnetic relaxation. He used the energy equality
\begin{equation} \label{e:Energy equality}
\frac{\d }{\d t} \int_\Omega \frac{|\u|^2+|\B|^2}{2} \d x = - \nu \int_\Omega |\nabla \times \u|^2 \d^3x
\end{equation}
(which is valid for smooth solutions) to deduce that total energy is dissipated as long as $\u \not\equiv 0$ and that, being a decreasing function of time, total energy approaches some non-negative number as $t \to \infty$. Moffatt concluded that $\u(t)$ tends to $\boldsymbol{0}$ while $\B(t)$ tends to a limit state $\B_\infty$ which solves the magnetohydrostatic equation $(\nabla \times \B) \times \B - \nabla p = 0$.

Nu\~nez \cite{Nun07} noted that it is not clear why and in what sense the infinite-time limits $\u_\infty = \bzero$ and $\B_\infty$ are attained. He showed, for smooth solutions and under the assumption that $\B$ is uniformly bounded in space-time, that $\|\u(t)\|_{L^2} \to 0$. He noted, however, the possibility that $\B(t)$ does not tend, even weakly, to a definite limit $\B_\infty \in L^2(\Omega)$.

Here we extend this previous analysis for line-tied magnetic fields, paying close attention to the behaviour of relative helicity in this system. In particular, we show that $\u(t)$ does tend to zero (weakly) without the extra integrability assumption that $\B$ is uniformly bounded. We also show that asymptotically, in a time-averaged sense, the magnetic field becomes magnetostatic as $t \to \infty$.

Below, the Lorentz force needs to be understood in a distributional sense via the formula
\begin{equation} \label{e:Lorentz force}
(\nabla \times \B) \times \B = \nabla \cdot (\B \otimes \B) - \nabla |\B|^2/2
\end{equation}
for solenoidal fields; note that the right-hand side makes sense as soon as $\B$ is square integrable. Under the interpretation \eqref{e:Lorentz force}, all the formulas used in the proof are correct although some of them require an approximation argument for justification. As such arguments are standard, for brevity, we exclude them.

\begin{theorem} \label{t:Existence of weak solutions}
    Let $\nu > 0$, $\u_0 \in L^2_\sigma(\Omega)$ and $\B_0 \in L^2(\Omega)$ with $\nabla \cdot \B_0 = 0$. Suppose
\[\u \in C_w([0,\infty);L^2_\sigma) \cap L^2(0,\infty;H^1_0), \quad \B \in C_w([0,\infty);L^2)\]
arise at the ideal MHD limit $\eta \to 0$ of Leray-Hopf solutions of resistive MHD. Then $\B$ and $\u$ have the following properties:
\begin{enumerate}[(i)]
\item $\u$ and $\B$ satisfy equations (\ref{mhd_1})-(\ref{mhd_4}) with $\eta = 0$. \label{i:Ideal limit}

\item The energy inequality
\begin{eqnarray*}
&&\frac{1}{2} \int_\Omega (|\u(\x,t)|^2+|\B(\x,t)|^2) \d^3 x + \nu \int_s^t \int_\Omega |\nabla \times \u(\x,\tau)|^2 \d^3x \d \tau \\
&&\leq \frac{1}{2} \int_\Omega (|\u(\x,s)|^2+|\B(\x,s)|^2) \d^3x
\end{eqnarray*}
holds for all $t > s$ and almost all $s \geq 0$, including $s = 0$. \label{i:Energy inequality at zero resistivity}

\item $\hel(t) = \hel(0)$ for all $t > 0$. \label{i:Taylor}

\item $\u(t) \rightharpoonup 0$ in $L^2(\Omega)$ when $t \to \infty$. \label{i:Limit velocity}

\item As $t \to \infty$, we have $h^{-1} \int_t^{t+h} [(\nabla \times \B(\tau)) \times \B(\tau) - \nabla p(\tau)] \d\tau \to 0$ in $(H^1_{0,\sigma}(\Omega))^*$ for every $h > 0$. \label{i:Magnetostatic limit}

\end{enumerate}
\end{theorem}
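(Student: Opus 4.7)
The five items are handled in their stated order, since (iv) uses (ii) and (v) uses (iv). For (i) and (ii), I pass to the limit as $\eta_j\to 0$ in the Leray--Hopf weak formulation \eqref{lh_weak_1}--\eqref{lh_weak_2} provided by Theorem~\ref{t:Leray-Hopf}. The strong $L^2(0,T;L^2(\Omega))$-convergence $(\u_j,\B_j)\to(\u,\B)$ together with the uniform bounds in $L^\infty(0,T;L^2(\Omega))\cap L^2(0,T;H^1(\Omega))$ from the energy inequality handle the Navier--Stokes and induction nonlinearities in the standard way, while the resistive term is estimated by
\[
\Bigl|\eta_j\int_0^T\!\!\int_\Omega(\nabla\times\B_j)\cdot(\nabla\times\btheta)\,\d^3x\,\d\tau\Bigr|\le \sqrt{\eta_j}\,\bigl(\eta_j\|\nabla\times\B_j\|_{L^2L^2}^2\bigr)^{1/2}\|\nabla\times\btheta\|_{L^2L^2}=O(\sqrt{\eta_j}),
\]
and (ii) follows by lower semi-continuity of the $L^2$-norms on the left of \eqref{lh_energy} and discarding the nonnegative resistive dissipation. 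For (iii), applying \eqref{hel_decay} to $(\u_j,\B_j)$ and a Cauchy--Schwarz computation analogous to the one above yield
\[
|\hel_j(t)-\hel_j(0)|\le 2\sqrt{\eta_j t}\,\|\B_j\|_{L^\infty(0,t;L^2)}\bigl(\eta_j\|\nabla\times\B_j\|_{L^2L^2}^2\bigr)^{1/2}=O(\sqrt{\eta_j t}).
\]
To identify $\lim\hel_j(t)$ with $\hel(t)$, I use that the Coulomb-gauge vector-potential map $\B\mapsto\A$ is bounded $L^2(\Omega)\to H^1(\Omega)$ and therefore compact from $L^2$ into itself; this makes $\hel$ weakly sequentially continuous on $L^2$, so along a subsequence realising $\B_j(t)\to\B(t)$ in $L^2(\Omega)$ for a.e.\ $t$ one gets $\hel(t)=\hel(0)$ for a.e.\ $t$, which extends to all $t$ by continuity of $\hel$.

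The main obstacle is (iv), where the integrated decay coming from (ii) must be upgraded to pointwise weak decay of $\u(t)$. Poincar\'e's inequality on $H^1_0(\Omega)$ together with (ii) yields $\int_0^\infty\|\u(\tau)\|_{L^2}^2\,\d\tau<\infty$. Fix $\bphi\in H^1_{0,\sigma}(\Omega)\cap C^\infty(\overline{\Omega})$ and set $f(t):=(\u(t),\bphi)_{L^2}$, so that $f\in L^2(0,\infty)\cap L^\infty(0,\infty)$. Differentiating via the momentum equation (the pressure drops out because $\bphi$ is divergence-free) gives
\[
|f'(t)|\le\nu\|\nabla\u(t)\|_{L^2}\|\nabla\bphi\|_{L^2}+\bigl(\|\u(t)\|_{L^2}^2+\|\B(t)\|_{L^2}^2\bigr)\|\nabla\bphi\|_{L^\infty},
\]
showing $f'\in L^2(0,\infty)+L^\infty(0,\infty)$ and hence that $f$ is uniformly continuous on $[0,\infty)$; the elementary fact that a square-integrable uniformly continuous function on $[0,\infty)$ vanishes at infinity then forces $f(t)\to 0$. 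Since $\u(t)\in L^2_\Sigma(\Omega)$ (because $\u\in H^1_0$ and $\nabla\cdot\u=0$ force $\u\perp L^2_H$ by integration by parts), $\|\u(t)\|_{L^2}$ is uniformly bounded, and $H^1_{0,\sigma}(\Omega)$ is dense in $L^2_\Sigma(\Omega)$, this extends to $\u(t)\rightharpoonup\bzero$ in $L^2_\sigma$, and the Helmholtz decomposition propagates the weak convergence to all of $L^2(\Omega)$.

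For (v), rewrite the momentum equation distributionally as $(\nabla\times\B)\times\B-\nabla p=\partial_t\u+\nabla\cdot(\u\otimes\u)-\nu\Delta\u$ and average over $[t,t+h]$, testing against $\bphi\in H^1_{0,\sigma}(\Omega)$. The time-derivative piece equals $h^{-1}\bigl((\u(t+h),\bphi)-(\u(t),\bphi)\bigr)$, whose norm in $(H^1_{0,\sigma}(\Omega))^*$ tends to $0$ because (iv) yields $\u(t)\rightharpoonup\bzero$ in $L^2$ and the embedding $L^2\hookrightarrow(H^1_{0,\sigma})^*$ is compact, so that weak $L^2$-convergence upgrades to strong convergence in the dual norm. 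The viscous average is bounded in $(H^1_{0,\sigma})^*$ by $\nu\bigl(\Aint_t^{t+h}\|\nabla\u\|_{L^2}^2\,\d\tau\bigr)^{1/2}$, while the convective average, via H\"older combined with Ladyzhenskaya's inequality $\|\u\|_{L^4}^2\le C\|\u\|_{L^2}^{1/2}\|\nabla\u\|_{L^2}^{3/2}$ and the uniform bound $\u\in L^\infty(0,\infty;L^2)$, is bounded by $C\bigl(\Aint_t^{t+h}\|\nabla\u\|_{L^2}^2\,\d\tau\bigr)^{3/4}$. Both vanish as $t\to\infty$ since $\int_0^\infty\|\nabla\u\|_{L^2}^2\,\d\tau<\infty$ forces $\Aint_t^{t+h}\|\nabla\u\|_{L^2}^2\,\d\tau\to 0$ for each fixed $h>0$.
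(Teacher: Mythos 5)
Your argument is correct, and for parts (i)--(iii) and (v) it follows essentially the paper's route: limit passage in the weak formulation using the identities behind \eqref{e:Lorentz force}, lower semicontinuity plus a.e.-in-time strong convergence for the energy inequality, and the $O(\sqrt{\eta_j t})$ Cauchy--Schwarz estimate applied to \eqref{hel_decay} for relative helicity (the paper closes the a.e.-$t$ to all-$t$ gap via Aubin--Lions continuity of $\A$, you via weak sequential continuity of $\hel$ coming from compactness of the potential map $\B \mapsto \A$; both are fine). The genuine difference is in (iv): the paper chooses, near each $t_k$, an auxiliary time $t_k'$ with $|t_k-t_k'|<\epsilon$ at which $\|\u(t_k')\|_{L^2}$ is small (possible since $\nabla\u \in L^2(0,\infty;L^2)$ plus Poincar\'e) and bounds $(\u(t_k)-\u(t_k'),\bvarphi)_{L^2}$ directly from the weak momentum equation, whereas you show $f(t)=(\u(t),\bphi)_{L^2}$ lies in $L^2(0,\infty)$ and is uniformly continuous (because $f' \in L^2+L^\infty$ via the equation) and invoke the Barbalat-type lemma. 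Both work; yours isolates the mechanism (time-integrability of $\|\u\|_{L^2}^2$ plus equicontinuity from the equation) and avoids the auxiliary times, while the paper's is more elementary and self-contained. In (v), your explicit use of the compact embedding $L^2(\Omega)\hookrightarrow (H^1_{0,\sigma}(\Omega))^*$ to convert the weak convergence from (iv) into dual-norm convergence of the time-derivative average is a point the paper passes over quickly, and using Ladyzhenskaya instead of Sobolev--Poincar\'e for the convective average is immaterial.

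One slip to correct, though it is inessential: in the density step of (iv) you assert $\u(t)\in L^2_\Sigma(\Omega)$ because $H^1_{0,\sigma}(\Omega)$ fields are $L^2$-orthogonal to $L^2_H(\Omega)$ ``by integration by parts''. This fails when $\Omega$ is multiply connected: in a solid torus, an azimuthal no-slip divergence-free flow has nonzero flux through the meridian cut and pairs nontrivially with the harmonic Neumann field, so $H^1_{0,\sigma}(\Omega)\not\subset L^2_\Sigma(\Omega)$ in general. Fortunately you never need $L^2_\Sigma$ here: since $\u(t)\in L^2_\sigma(\Omega)$, and $C^\infty_{c,\sigma}(\Omega)\subset H^1_{0,\sigma}(\Omega)$ is dense in $L^2_\sigma(\Omega)$, the uniform bound $\sup_{t>0}\|\u(t)\|_{L^2}<\infty$ together with the Helmholtz decomposition already yields $\u(t)\rightharpoonup \bzero$ in all of $L^2(\Omega)$, exactly as in the rest of your argument.
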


\begin{proof}
In (i), the only non-trivial part is the convergence $(\u_j \cdot \nabla) \u_j \to (\u \cdot \nabla) \u$ and $(\nabla \times \B_j) \times \B_j \to (\nabla \times \B) \times \B$ in the sense of distributions. These are obtained by using \eqref{e:Lorentz force} and the formula $(\u \cdot \nabla) \u = \nabla \cdot (\u \otimes \u)$ in a standard manner.

\vspace{0.5cm}

In (ii), for any $0 \leq s \leq t < \infty$, we can pass to a subsequence to have weak convergence $\nabla \u_j \rightharpoonup \nabla \u$ in $L^2(s,t;L^2(\Omega))$, so that $\int_s^t \int_\Omega |\nabla \times \u(\x,\tau)|^2 \d^3 x \d\tau \leq \liminf_{j \to \infty} \int_s^t \int_\Omega |\nabla \times \u_j(\x,\tau)|^2 \d^3 x \d\tau$. Since $\|\u_j-\u\|_{L^2(0,T;L^2)} \to 0$ and $\|\B_j-\B\|_{L^2(0,T;L^2)} \to 0$ for every $T > 0$, by passing to a further subsequence we get $\|\u_j(t) - \u(t)\|_{L^2(\Omega)} \to 0$ and $\|\B_j(t) - \B(t)\|_{L^2(\Omega)} \to 0$ a.e. $t \geq 0$, which gives the energy inequality for a.e. $s,t \geq 0$. By the weak continuity $\B,\u \in C_w([0,\infty);L^2)$, claim \eqref{i:Energy inequality at zero resistivity} follows.

\vspace{0.5cm}

For the proof of \eqref{i:Taylor} we recall that the Leray-Hopf solutions $(\u_j,\B_j)$ of resistive MHD satisfy $\hel(\B_j,t) = \hel(0) - 2 \eta \int_0^t \int_\Omega \B_j(s) \cdot \nabla \B_j(\tau) \,\d^3 x \d \tau$. We argue as in~\cite{FL20}; by the Cauchy-Schwarz inequality and energy inequality,
\begin{align*}
    \left| 2 \eta \int_0^t \int_\Omega \B_j(s) \cdot \nabla \B_j(\tau) \,\d^3 x\, \d \tau \right|
&\leq 2 \eta \|\B_j\|_{L^\infty(0,t;L^2)} \int_0^t \|\nabla \B_j(\tau)\|_{L^2}\, \d \tau \\
&\leq 2 \eta \|\B_0\|_{L^2} \sqrt{t} \sqrt{\int_0^t \|\nabla \B_j(\tau)\|_{L^2}^2\, \d\tau} \\
&\leq \sqrt{2\eta t} \|\B_0\|_{L^2}^2.
\end{align*}
Now the norm convergence $\|\B_j-\B\|_{L^\infty(0,T;L^2(\Omega)} \to 0$ implies $\hel(t) = \hel(0)$ for a.e. $t > 0$. Since $\B \in C_w([0,\infty);L^2)$, the Aubin-Lions lemma gives $\A \in C([0,\infty);L^2)$ and so $\hel(t) = \hel(0)$ for all $t > 0$.

\vspace{0.5cm}

We next prove \eqref{i:Limit velocity}. By Theorem \ref{t:Existence of weak solutions}, $\int_0^\infty \int_{\Omega} |\nabla \times \u|^2 \d x \d t < \infty$. Since $\u|_\Gamma = 0$, we can use the Poincar\'e inequality to get $\|\u(t_k)\|_{L^2} \to 0$ for any sequence of times $t_k \to \infty$ such that $\|\nabla \times \u(t_k)\|_{L^2} \to 0$.

Suppose now that $t_k \to \infty$; since $\sup_{t > 0} \|\u(t)\|_{L^2(\Omega)} < \infty$, it suffices to fix $\boldsymbol{\varphi} \in C_{c,\sigma}^\infty(\Omega)$ and show that $\int_\Omega \u(\x,t_k) \cdot \boldsymbol{\varphi}(\x) \d x \to 0$. Let $\epsilon > 0$. Choose times $t_k' < t_k$ such that $|t_k-t_k'|< \epsilon$ and, for all large enough $k \in \mathbb{N}$, $\|\u(t_k')\|_{L^2} < \epsilon$. We get
\begin{align*}
\left|\int_\Omega (\u(t_k)-\u(t_k')) \cdot \boldsymbol{\varphi}\, \d x \right|
&= \left| \int_{t_k'}^{t_k} \langle \partial_t \u, \boldsymbol{\varphi} \rangle\, \d t \right| \\
&= \left|\int_{t_k'}^{t_k} \int_\Omega (\u \otimes \u - \B \otimes \B - \nu \nabla \u) \cdot \nabla \boldsymbol{\varphi}\, \d x \d t\right| \\
&\leq \epsilon \left\|\nabla \boldsymbol{\varphi}\right\|_{L^\infty} (\|\B\|_{L^\infty(0,\infty;L^2)}^2 + \|\u\|_{L^\infty(0,\infty;L^2)}^2) \\
&+ \nu \sqrt{\epsilon} \|\nabla \boldsymbol{\varphi}\|_{L^2} \|\nabla \u\|_{L^2(0,\infty;L^2)}.
\end{align*}
Claim \eqref{i:Limit velocity} follows since $\epsilon > 0$ was arbitrary.

\vspace{0.5cm}

We then prove \eqref{i:Magnetostatic limit}. Let $h,t > 0$ and $\boldsymbol{\varphi} \in H^1_{0,\sigma}(\Omega)$. We write
\[\int_t^{t+h} \int_\Omega ((\nabla \times \B) \times \B - \nabla p) \cdot \boldsymbol{\varphi} \,\d^3x \d\tau = \int_t^{t+h} \langle \partial_\tau \u + (\u \cdot \nabla) \u - \nu \Delta \u, \boldsymbol{\varphi} \rangle \d\tau.\]
By using \eqref{i:Limit velocity} we get
\[\frac{1}{h} \int_t^{t+h} \langle \partial_\tau \u, \boldsymbol{\varphi} \rangle\,\d^3x \d\tau = \frac{1}{h} \int_\Omega (\u(t+h)-\u(t)) \cdot \boldsymbol{\varphi} \,\d^3x \to 0.\]
Furthermore, using the formula $(\u \cdot \nabla) \u = \nabla \cdot (\u \otimes \u)$ as well as H\"older and Poincar\'e inequalities we obtain
\begin{align*}
\left|\frac{1}{h} \int_t^{t+h} \int_\Omega (\u \cdot \nabla) \u \cdot \boldsymbol{\varphi} \,\d^3x \d\tau\right|
&= \left|\frac{1}{h} \int_t^{t+h} \int_\Omega u_i u_j \partial_j \boldsymbol{\varphi}_i \,\d^3x \d\tau\right| \\
&\leq \frac{1}{h} \|\u\|_{L^2(t,t+h;L^4(\Omega))}^2 \|\boldsymbol{\varphi}\|_{H^1(\Omega)} \\
&\leq \frac{C}{h} \|\nabla \u\|_{L^2(t,t+h;L^2(\Omega))}^2 \|\boldsymbol{\varphi}\|_{H^1(\Omega)}\to 0.
\end{align*}
Similarly,
\[
\left|\frac{1}{h} \int_t^{t+h} \langle \Delta \u, \boldsymbol{\varphi} \rangle \,\d^3x \d\tau\right|
= \left|\frac{1}{h} \int_t^{t+h} \int_\Omega \nabla \u \cdot \nabla \boldsymbol{\varphi} \,\d^3x \d\tau\right| \leq \frac{1}{h^{1/2}} \|\nabla \u\|_{L^2(t,t+h;L^2(\Omega))} \|\boldsymbol{\varphi}\|_{H^1(\Omega)} \to 0.
\]
The proof of Theorem \ref{t:Existence of weak solutions} is complete.\end{proof}



\section{Summary}

The purpose of this work has been to provide some rigorous results about the MHD relaxation of line-tied magnetic fields, with a particular focus on relative helicity. Our results extend those of \cite{Mof85} and \cite{Nun07}, not only in considering a different, and more applicable, boundary condition (with all the consequences for magnetic helicity of such a change) but also in translating the results into the context of turbulence. In particular, we have shown, through Theorem \ref{t:Existence of weak solutions}, that in the ideal MHD limit, relative helicity is conserved under turbulent relaxation and that the end state tends asymptotically, in a time-averaged sense, to a (weak) magnetohydrostatic balance.

\section*{Acknowledgments}
SL acknowledges support by the ERC Advanced Grant 834728 and the Academy of Finland CoE FiRST. DM acknowledges
support from a Leverhulme Trust grant (RPG-2023-182), a  Science and Technologies Facilities Council (STFC) grant (ST/Y001672/1) and a Personal Fellowship from the Royal Society of Edinburgh (ID: 4282). 

\bibliography{relhel}

\end{document}